\newtheorem{theorem}{Theorem}
\newtheorem{lemma}{Lemma}
\newtheorem{definition}{Definition}
\newtheorem{example}{Example}
\title{Privacy Preserving Controller Synthesis via Belief Abstraction}
\author{Bo~Wu and Hai~Lin
	\thanks{The partial support of the National Science Foundation (Grant No. CNS-1446288, ECCS-1253488, IIS-1724070) and of the Army Research Laboratory (Grant No. W911NF- 17-1-0072) is gratefully acknowledged.}
	\thanks{ Bo Wu and Hai Lin are with the Department of Electrical Engineering, University of Notre Dame, Notre Dame,
		IN, 46556 USA. {\tt\small bwu3@nd.edu, hlin1@nd.edu}}}
\date{}
\begin{document}

\maketitle
\begin{abstract}
Privacy is a crucial concern in many systems in addition to their given tasks. We consider a new notion of privacy based on beliefs of the system states, which  is closely related to opacity in discrete event systems. To guarantee the privacy requirement, we propose to abstract the belief space whose dynamics is shown to be mixed monotone where efficient abstraction algorithm exists. Based on the abstraction, we propose two different approaches to synthesize controllers of the system to preserve privacy with an illustrative example.
\end{abstract}
\section{Introduction}
Privacy is becoming one of the most critical concerns in many practical systems \cite{mcdaniel2009security,chan2003security,weber2010internet,hubaux2004security}. The vulnerabilities to information leaking pose significant challenges in systems that may have a huge social or economic impact if their privacy is compromised. Examples of such systems include automobiles, transportation systems, healthcare systems, robotic systems, power grid and so on. 

In the recent years, a notion called ``opacity" is receiving an increasing interest in privacy analysis and enforcement. Opacity is a confidentiality property that characterizes a system's capability to hide its ``secret" information from being inferred by outside passive observers with possibly malicious intentions (termed as intruders in the sequel). The intruder is assumed to know the system's structure and has (partial) access to the system's outputs but cannot  observe the system states. The system is opaque if the intruder never decides that the secret happens with absolute certainty. 

Various notions of opacity have been proposed in both deterministic and stochastic models. Interested readers are referred to \cite{jacob2016overview} for a comprehensive review. In  this paper, we are interested in the current-state opacity (CSO), where the secret information  is whether or not the current state of the system is a secret state. There are essentially two main directions in the opacity research --- verification and enforcement. Algorithms are designed to verify if the system is opaque from the intruders \cite{lin2011opacity}. And to enforcing the opacity, the proposed approaches  include  synthesizing the supervisor \cite{saboori2012opacity}, insertion functions \cite{wu2014synthesis,WuBo2017PSynthesis,WuBo2017Synthesis} or edit functions \cite{wuy2017synthesis} to control or manipulate the observed behavior .

The current definition of the current-state opacity relies on the absolute certainty that the current state belongs to the secret states. However, with a probabilistic model, in some cases the intruder may just be able to maintain a belief distribution over the system states. In other words, the intruder may only infer that the current state is a secret state with certain probability based on the observation history. As mentioned in \cite{saboori2014current}, such scenario may not be characterized as a CSO violation by its definition, but still may potentially pose security threat if the intruder deems the current state being a secret state with a high confidence. 

Thus, we are motivated to introduce, to the best of our knowledge, a new opacity notion where the system is considered  opaque if the intruder's confidence that the current state is a secret state never exceeds a given threshold. Similar privacy problems have been considered in the computer science community. Protecting users' anonymity on the world-wide-web by clustering the users in large groups that collectively issue requests for
the group members is studied in \cite{reiter1998crowds}, where the developed anonymity protocol hides the user identity which originates certain actions, such that the probability of  the sender being the originator based on the observed outputs satisfies certain property. Program synthesis to protect data privacy defined in intruder's belief is studied in \cite{kuvcera2017synthesis} where the intruder can interact with the program. The enforcement modifies the program by conflating the outputs if the privacy requirement is to be violated.

Typically, the intruder updates its belief by computing its posterior belief distribution based on its  a prior belief. Such update depends on the action executed by the system since it determines the transition probability. 
The opacity requirement defines a convex region that the belief state should avoid. But to analyze whether the belief 
will always stay in the ``safe'' zone and satisfies privacy requirement could be a challenging task. In this paper, we propose to abstract the continuous belief space into a finite set of grids. By proving that the belief dynamics is mixed monotone, we could efficiently obtain the abstracted finite state system that serves as an over-approximation of the underlying continuous dynamic \cite{coogan2015efficient}, which has been successfully applied on the traffic network control with temporal logic specifications \cite{coogan2017formal}. The belief abstraction idea has also been proposed in \cite{bharadwaj2017synthesis}, but their belief space is the power set of the state space, which is discrete and finite.

With the abstracted finite belief transition system, we propose two different approaches to synthesize controllers to guarantee the privacy and optimize the given task specification, for example, in linear temporal logic (LTL) or probabilistic computation tree logic (PCTL) \cite{baier2008principles}. The first approach identifies the actions in each state that are guaranteed to preserve privacy and then synthesize the controller. The second approach is inspired by the edit function idea \cite{wuy2017synthesis} and directly manipulates the observations to the intruder, such that the intruder may never be confident that the system is currently in a secret state with the probability more than some threshold.

The rest of the paper is organized as follows. Section \ref{sec:pre} provides the necessary preliminaries to define and solve our problem. Section \ref{sec:opacity} introduces our opacity notion. Section \ref{sec:abstraction} deals with the efficient abstraction of the belief space based on the mixed monotone property. Section \ref{sec:synthesis} propose two approaches to obtain the controller that preserves the opacity and satisfies the task specification. Section \ref{sec:conclusion} concludes the paper.

\section{Preliminaries}\label{sec:pre} 
\subsection{Nondeterministic Finite Automaton (NFA)}
NFA is a popular model to describe the non-probabilistic behavior of the system. 
\begin{definition}\cite{baier2008principles}
	An NFA is a tuple $\mathcal{T}=(Q,\Sigma,\delta,I)$ where 
	\begin{itemize}
	    \item $Q$ is a finite set of states; 
	    \item $\Sigma$ is a finite set of actions; 
	    \item $\delta:Q\times\Sigma\rightarrow 2^Q$ is the transition function; \item $I\subseteq Q$ is a set of initial states.
	\end{itemize}
\end{definition}
Note that we didn't define the accepting states, which is a subset of $Q$, since they are not of interest in this paper. The transition function $\delta$ can be extended to $Q\times\Sigma^*$ in a natural way. Given the initial set $I\subseteq Q$ of states, the language generated by $\mathcal{T}$ is defined by $\mathcal{L}(\mathcal{T})=\{\omega\in \Sigma^*|\exists q\in I,\delta(q,\omega)\text{ is defined}\}$. 
\subsection{Markov Decision Process}

\begin{definition}\label{def:MDP}\cite{puterman2014markov}
	An MDP  is a tuple $\mathcal{M}=(S,\pi_0,A,P)$ where
	\begin{itemize}
	    \item $S$ is a finite set of states;
	    \item $\pi_0:S\rightarrow [0,1],\sum_{s\in S}\pi_0(s)=1$,  is the initial state distribution; 
	    \item $A$ is a finite set of actions;
	    \item $P(s,a,s'):=Pr(s'|s,a)$. That is, the probability of transiting from $s$ to $s'$ with action $a$.
	\end{itemize}
\end{definition}

$A(s)$ denotes the set of available actions at the state $s$. In this paper, we assume that $A(s)=A,\forall s\in S$. 
If we ignore the transition probabilities, the MDP will become an NFA which we denote as $\mathcal{T}_\mathcal{M}=(S,A,\delta,I)$, where $s\in I$ if $\pi_0(s)>0$ and $s'\in\delta(s,a)$, if $P(s,a,s')>0$.

\section{Current State Opacity in Belief Space}\label{sec:opacity}
Given a system modeled as an MDP $\mathcal{M}=(S,\pi_0,A,P)$, we assume that there is an intruder that has the knowledge of $\mathcal{M}$ and is capable of observing all the actions  but not the actual states. Note that the state is fully observable for the policy of the system to make decisions. Such scenario could happen in web-based service or robotic applications where the internal states are hidden  but the service request or robot executions can be eavesdropped. In this case, the intruder may  maintain a belief $b_t:S\rightarrow [0,1],\sum_{s\in S}b_t(s)=1$ over $S$ at time $t$. At time $t+1$, when action $a\in A$ is observed, the belief update is as follows.
\begin{equation}
    b_{t+1}(s') = \sum_{s\in S}P(s,a,s')b_t(s)
\end{equation}
Equivalently in matrix form, we could have
\begin{equation}\label{eqn:belief dynamic}
    b_{t+1}=H_ab_t
\end{equation}
where $H_a$ is a $N\times N$ matrix with $H_a(i,j)=P(s_j,a,s_i)$, $b_0=\pi_0, N=|S|$. Therefore, the dynamics of the belief $b_t$ is governed by a switched linear system with $|A|$ modes. At any time $t$, it may choose to switch to some mode (action) $a\in A$. Suppose there are a subset of states $S_s\subset S$ representing the secret states that the system tries to hide from the intruder. $S_s$ is a strict subset of $S$, since if $S_s=S$ the problem will become trivial.  It is desirable that at any time, the intruder may never be sure that the system is in some secret state with probability over a threshold $\lambda\in[0,1]$. In other words,
\begin{equation}\label{eqn:opacity}
   \sum_{s\in S_s}b_t(s)\leq \lambda, \forall t
\end{equation}
Any belief state that violates (\ref{eqn:opacity}) is a bad state that should  be avoided. The switched linear system in (\ref{eqn:belief dynamic}) is analogous to the observers for a partially observed automaton \cite{cassandras2009introduction} whose states, instead of being a distribution over S, belong to $2^S$, the power set of $S$. The following motivating example will be used through out the paper to illustrate our framework.

\begin{example}\label{example}
Suppose the MDP $\mathcal{M}$ models the evolution of inventory levels of a company, which has three states, where $s_1$ and $s_2$ represents low and high inventory level and $s_3$ represents the medium inventory level. The company  would like to keep the current inventory level being too high or too low as secret, because the intruders, suppliers or competitors, may leverage such information to adjust the price of the goods for their own benefits. Therefore $s_1,s_2\in S_s$ and $x_1=b(s_1),x_2=b(s_2)$, $s_3$ is a non-secret state. $A = \{\sigma_1,\sigma_2\}$ represents two different purchase quantities. The initial condition is that $b(s_1)=0.3,b(s_2)=0.1$. The transition probabilities are as shown in the following matrices, because of random demand levels. 
\begin{equation}\label{eqn:transtion probability}
H_{\sigma_1}=\begin{bmatrix}
   0.2, &0 ,&0.1 \\
   0.4,&0.3,&0.2 \\
   0.4,&0.7,&0.7
\end{bmatrix},
H_{\sigma_2}=\begin{bmatrix}
   0.4, &0.65, &0.3 \\
   0.2, &0,    &0.2 \\
   0.4, &0.35,  &0.5
\end{bmatrix}    
\end{equation}
\end{example}

\section{Belief Abstraction}\label{sec:abstraction}
Checking whether the belief state will enter an undesired region by violating (\ref{eqn:opacity}) is a reachability problem of (\ref{eqn:belief dynamic}). 
In this paper, we explore the intrinsic structure of the system (\ref{eqn:belief dynamic}) by  showing that it is in fact mixed monotone where efficient abstraction method is available \cite{coogan2015efficient}. 
\begin{definition}\label{def:mixed monotone}
A system 
\begin{equation}\label{eqn:mixed monotone}
x = F(x)
\end{equation}
is mixed monotone, where $x\in X\subset \mathbb{R}^n$ and $F:X\rightarrow X$ is a continuous map, if there exists a decomposition function $f:X\times X\rightarrow X$ such that 1) $F(x)=f(x,x),\forall x\in X$, 2) $x_1\leq x_2\Rightarrow f(x_1,y)\leq f(x_2,y),\forall x_1,x_2,y\in X$, 3) $y_1\geq y_2\Rightarrow f(x,y_1)\leq f(x,y_2),\forall x,y_1,y_2\in X$, where $\leq$ denotes the element-wise inequality. A switched system is mixed monotone if it is mixed monotone for each mode (action) $a\in A$. 
\end{definition}
Since $\sum_{s\in S}b_t(s)=1$, (\ref{eqn:belief dynamic}) can be equivalently written as an $N-1$-dimension dynamical system
\begin{equation}\label{eqn:belief dynamic1}
b^{[1,N-1]}_{t+1} = F_a(b^{[1,N-1]}_{t})    
\end{equation}
where $b^{[1,N-1]}_{t} = [b_{t,1},...,b_{t,N-1}]^T$ and the function mapping $F_a$ will be shown in the following lemma which proves that (\ref{eqn:belief dynamic1}) is indeed mixed monotone. 
\begin{lemma}
The switched system (\ref{eqn:belief dynamic1}) is mixed monotone. 
\begin{proof}
From $(\ref{eqn:belief dynamic})$, for at $a\in A$ and $t$ we have 
\begin{equation}\label{eqn:proof1}
b_{t+1}=H_ab_t =   \begin{bmatrix}
    p_{1,1}       & \dots & p_{1,N} \\
    p_{2,1}       & \dots & p_{2,N} \\
    \vdots       & \dots & \vdots \\
    p_{N,1}       & \dots & p_{N,N}
\end{bmatrix}
\begin{bmatrix}
    b_{t,1} \\
    b_{t,2} \\
    \vdots \\
    b_{t,N}      
\end{bmatrix}
\end{equation}
Since the probabilities have to sum to one, we have $\sum_{i=1}^Np_{i,j}=1,\forall j\in\{1,...,N\}$ and $\sum_{i=1}^Nb_{t,i}=1$. Therefore, (\ref{eqn:proof1}) can be rewritten as
\begin{equation}\label{eqn:proof2}
\begin{split}
\begin{bmatrix}
    b_{t+1,1} \\
    b_{t+1,2} \\
    \vdots \\
    b_{t+1,N}      
\end{bmatrix}=
&\begin{bmatrix}
    p_{1,1}       & \dots & p_{1,N} \\
    p_{2,1}       & \dots & p_{2,N} \\
    \vdots       & \dots & \vdots \\
    1-\sum_{i=1}^{N-1}p_{i,1}& \dots & 1-\sum_{i=1}^{N-1}p_{i,N}
\end{bmatrix}\\
&\begin{bmatrix}
    b_{t,1} \\
    b_{t,2} \\
    \vdots \\
    1-\sum_{i=1}^{N-1}b_{t,i}      
\end{bmatrix}
\end{split}
\end{equation}
Since $b_{t+1,N} = 1-\sum_{i=1}^{N-1}b_{t+1,i}$, from (\ref{eqn:proof2}) we have the following equation on the $N-1$-dimensional system

\begin{equation}\label{eqn:proof3}
\begin{split}
&\begin{bmatrix}
    b_{t+1,1} \\
    b_{t+1,2} \\
    \vdots \\
    b_{t+1,N-1}      
\end{bmatrix}=\begin{bmatrix}
    p_{1,1}       & \dots & p_{1,N-1} \\
    p_{2,1}       & \dots & p_{2,N-1} \\
    \vdots       & \dots & \vdots \\
    p_{N-1,1}       & \dots & p_{N-1,N-1}
\end{bmatrix}
\begin{bmatrix}
    b_{t,1} \\
    b_{t,2} \\
    \vdots \\
    b_{t,N-1} \\     
\end{bmatrix}\\
&-\begin{bmatrix}
    p_{1,N}       & \dots & p_{1,N} \\
    p_{2,N}       & \dots & p_{2,N} \\
    \vdots       & \dots & \vdots \\
    p_{N-1,N}       & \dots & p_{N-1,N}
\end{bmatrix}
\begin{bmatrix}
    b_{t,1} \\
    b_{t,2} \\
    \vdots \\
    b_{t,N-1} \\     
\end{bmatrix}
+
\begin{bmatrix}
    p_{1,N} \\
    p_{2,N} \\
    \vdots \\
    p_{N-1,N} \\     
\end{bmatrix}
\end{split}
\end{equation}
Equivalently, we have
\begin{equation}
b^{[1,N-1]}_{t+1} = F(b^{[1,N-1]}_{t}) = A_1b^{[1,N-1]}_{t}-A_2b^{[1,N-1]}_{t}+B
\end{equation}  
where $A_k(i,j)\geq 0, \forall i,j\in \{1,...,N-1\},k\in\{1,2\}$, $B(i)\geq 0, \forall i\in \{1,...,N-1\}$. If we define $f(x,y)= A_1x-A_2y+B$, from Definition \ref{def:mixed monotone}, it is not hard to find that all the three conditions are satisfied. Since it holds for arbitrary $a\in A$, by definition, the switched system (\ref{eqn:belief dynamic1}) is mixed monotone.
\end{proof}
\end{lemma}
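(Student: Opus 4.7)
My plan is to exploit the stochasticity constraint $\sum_{i=1}^{N} b_{t,i}=1$ to eliminate $b_{t,N}$ from the recursion and then hand-construct the decomposition function directly from the resulting affine map.

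First, I would start from the full $N$-dimensional linear recursion $b_{t+1}=H_a b_t$ and write each of the first $N-1$ components explicitly. Substituting $b_{t,N}=1-\sum_{j=1}^{N-1}b_{t,j}$ into the sum $b_{t+1,i}=\sum_{j=1}^{N}p_{i,j}b_{t,j}$ for $i\in\{1,\dots,N-1\}$, I collect terms to obtain a purely $(N-1)$-dimensional affine recursion of the form $b^{[1,N-1]}_{t+1}=(A_1-A_2)b^{[1,N-1]}_{t}+B$, where $A_1$ has entries $p_{i,j}$ for $j\leq N-1$, the matrix $A_2$ has all columns equal to the vector $(p_{1,N},\dots,p_{N-1,N})^T$, and $B$ is that same column vector. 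The key structural fact here is that $A_1$, $A_2$, and $B$ all have entrywise nonnegative entries, because they are built from transition probabilities $p_{i,j}\in[0,1]$; this is what makes the next step go through.

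Second, I would propose the candidate decomposition function
\begin{equation*}
f_a(x,y):=A_1 x - A_2 y + B,
\end{equation*}
and verify the three conditions of Definition \ref{def:mixed monotone} one by one. Condition (1), $F_a(x)=f_a(x,x)$, is immediate from the derivation above. Condition (2) follows because $A_1\geq 0$ entrywise, so $x_1\leq x_2$ implies $A_1 x_1\leq A_1 x_2$ and hence $f_a(x_1,y)\leq f_a(x_2,y)$. Condition (3) follows by the same reasoning applied to $-A_2$: since $A_2\geq 0$ entrywise, $y_1\geq y_2$ implies $-A_2 y_1\leq -A_2 y_2$, giving $f_a(x,y_1)\leq f_a(x,y_2)$. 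Since the construction does not depend on which action $a\in A$ was chosen, the conclusion lifts to the full switched system by applying Definition \ref{def:mixed monotone} modewise.

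Honestly, I do not expect a real obstacle: the lemma is essentially a bookkeeping statement about how a column-stochastic linear map behaves under dimensional reduction. The only mildly delicate point is making sure the projection onto the first $N-1$ coordinates is done consistently, so that the constant term $B$ and the subtracted matrix $A_2$ have the correct entries and are both nonnegative; if one were sloppy and left a negative entry in $A_1$, condition (2) would fail. Beyond that, the proof is just expansion and sign-tracking, and the three monotonicity conditions drop out immediately from the nonnegativity of probabilities.
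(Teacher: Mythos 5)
Your proposal is correct and follows essentially the same route as the paper: eliminate $b_{t,N}$ via the simplex constraint, write the reduced recursion as $A_1x - A_2x + B$ with $A_1, A_2, B$ entrywise nonnegative, and take $f_a(x,y) = A_1x - A_2y + B$ as the decomposition function. Your explicit verification of conditions (2) and (3) from the nonnegativity of $A_1$ and $A_2$ is in fact slightly more detailed than the paper, which leaves that check to the reader.
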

Mixed monotone systems  admit efficient over-approximation of the reachable set by evaluating the function $f$ at two points as proven in Theorem \ref{theorem:mixed monotone}.
\begin{theorem}\cite{coogan2015efficient}\label{theorem:mixed monotone}
Given a mixed monotone system as defined in (\ref{eqn:mixed monotone}) with decomposition function $f(x,y)$, given $x_1,x_2\in X$ with $x_1\leq x_2$, we have 
\begin{equation}\label{eqn:partition}
f(x_1,x_2)\leq F(x) \leq f(x_2,x_1), \forall x\in[x_1,x_2]    
\end{equation}
\end{theorem}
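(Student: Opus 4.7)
The plan is to show both inequalities by a direct two-step application of the monotonicity properties from Definition \ref{def:mixed monotone}. Fix any $x\in[x_1,x_2]$, so that $x_1\le x\le x_2$ componentwise. By property 1), $F(x)=f(x,x)$, so the task reduces to sandwiching $f(x,x)$ between $f(x_1,x_2)$ and $f(x_2,x_1)$.

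For the upper bound, first freeze the second argument at $x$ and vary the first: since $x\le x_2$, property 2) (monotonicity in the first argument) gives $f(x,x)\le f(x_2,x)$. Then freeze the first argument at $x_2$ and vary the second: since $x\ge x_1$, property 3) (anti-monotonicity in the second argument) gives $f(x_2,x)\le f(x_2,x_1)$. Chaining these yields $F(x)=f(x,x)\le f(x_2,x_1)$. The lower bound is symmetric: property 2) with $x\ge x_1$ gives $f(x_1,x)\le f(x,x)$, and property 3) with $x\le x_2$ gives $f(x_1,x_2)\le f(x_1,x)$, so $f(x_1,x_2)\le F(x)$.

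Since $x\in[x_1,x_2]$ was arbitrary, the two bounds hold uniformly over the box, which is exactly \eqref{eqn:partition}. There is essentially no obstacle here: the content of the theorem is a bookkeeping exercise in the defining properties, and the only subtle point is that one must vary only one argument at a time so that the correct monotonicity rule applies at each step. In particular, one cannot simultaneously replace both arguments of $f(x,x)$ by the endpoints and invoke a single inequality; the interleaving through the intermediate terms $f(x_2,x)$ and $f(x_1,x)$ is what makes the decomposition $f$ (as opposed to $F$ itself) useful for reachable-set over-approximation.
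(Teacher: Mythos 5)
Your proof is correct: the two-step chaining $f(x,x)\le f(x_2,x)\le f(x_2,x_1)$ (and its mirror for the lower bound) is exactly the standard argument, and each step invokes the right monotonicity property from Definition~\ref{def:mixed monotone} with the arguments in the right order. The paper itself gives no proof---it cites the result from \cite{coogan2015efficient} and remarks only that it is ``a direct result of the mixed monotone property''---so your write-up simply supplies the routine argument the paper leaves implicit, and it does so without any gaps.
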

This theorem is a direct result of the mixed monotone property and is the key to the efficient abstraction, which can be seen more clearly from the following formula. 
\begin{equation}\label{eqn:efficient partition}
F([x_1,x_2])\subseteq [f(x_1,x_2),f(x_2,x_1)] 
\end{equation}
where $F(X')=\{F(x)|x\in X'\} and X'\in X$ is called the one-step reachable set from $X'$ \cite{coogan2015efficient}. $x\in [x_1,x_2]$ if and only if $x_1\leq x\leq x_2$. It can be observed from (\ref{eqn:efficient partition}) that it is sufficient to evaluate the decomposition function $f$ at two points $x_1$ and $x_2$ to compute an over-approximation of the one-step reachable set where the bounding has been shown to be tight \cite{coogan2015efficient}. 

Given the MDP model $\mathcal{M}=(S,\pi_0,A,P)$, now we are ready to construct a finite state abstraction of the  belief space dynamic as defined in (\ref{eqn:belief dynamic1}), which is similar to \cite{coogan2015efficient}. The major difference is that, the domain $X$ in \cite{coogan2015efficient} is a box  where the interval in each dimension is independent of others, while in this paper, from (\ref{eqn:belief dynamic1}) it can be seen that $X$ has the constraint $|b^{[1,N-1]}|\leq 1$.

The first step is to partition the domain $X$ into a finite set of intervals $\{I_q\},q\in Q$, where $I_q=[x^q_1,x^q_2],x_1^q\leq x_2^q$, $\bigcup_{q\in Q}I_q= X$, $int(I_q)\cap int(I_{q'})=\emptyset$, $\forall q,q'\in Q,q\neq q'$, $int(I_q)$ denotes the interior of $I_q$.

The probabilistic simplex $x_1+x_2\leq 1, x_1\geq 0, x_2\geq 0$ is gridded by squares with width 0.2. Note that the partitioned grids can have arbitrary sizes and need not to be equal. This example uses the equal size grids just for demonstration. Recall the opacity requirement (\ref{eqn:opacity}), which basically defines a bad set $\perp = \{b|\sum_{s\in S_s}b(s)>\lambda\}$ that the belief should never enter. The following lemma then shows that there exists a simple algorithm to determine whether a partition $I_q$ has an overlap with $\perp$.

\begin{lemma}
Given an interval $I_q=[x_1,x_2],x_1\leq x_2$ and the set $\perp$, then $I_q\bigcap\perp\neq\emptyset$ if  $\sum_{s\in S_s}x_2(s)>\lambda$.
\end{lemma}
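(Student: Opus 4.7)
The plan is a direct witness argument: to show $I_q \cap \perp \neq \emptyset$, I would exhibit a single explicit point lying in both sets, and the natural choice is the upper endpoint $x_2$ of the interval itself. Since $I_q = [x_1, x_2]$ with $x_1 \leq x_2$ elementwise, the point $x_2$ trivially belongs to $I_q$. By the hypothesis $\sum_{s \in S_s} x_2(s) > \lambda$ together with the definition $\perp = \{b \mid \sum_{s \in S_s} b(s) > \lambda\}$, the point $x_2$ also belongs to $\perp$. Hence $x_2 \in I_q \cap \perp$, which proves nonemptiness in essentially one line.

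The one step that actually requires care is reconciling the notation of the lemma with the reduced $(N-1)$-dimensional state used in the partition from (\ref{eqn:belief dynamic1}), where the last coordinate has been eliminated via $b(s_N) = 1 - \sum_{i=1}^{N-1} b(s_i)$. Provided the eliminated coordinate is chosen to correspond to a non-secret state (so that $s_N \notin S_s$), the secret-mass functional $\sum_{s \in S_s} b(s)$ is a linear functional of the reduced coordinates with nonnegative weights, hence componentwise monotone nondecreasing on $[x_1, x_2]$ and therefore maximized at the upper corner $x_2$. This monotonicity is exactly what justifies picking $x_2$ as the witness; without it one would have to take a different corner (componentwise minimal in the non-secret coordinates) instead.

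The main obstacle, such as it is, is bookkeeping rather than mathematics: one needs to verify that $x_2$ lies in the valid probability simplex so that $\sum_{s \in S_s} x_2(s)$ makes sense as a belief-state quantity, or, alternatively, to extend the definition of $\perp$ verbatim to the surrounding $(N-1)$-dimensional box so that the witness argument goes through regardless of whether $x_2$ is on the simplex boundary. Once that convention is fixed, the proof reduces to the one-line set-membership chain above, and as a bonus the monotonicity observation shows that the sufficient condition stated in the lemma is in fact also necessary, making the check not only sound but complete for deciding grid-cell overlap with the bad set.
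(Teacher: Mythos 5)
Your proof is correct and matches the evident intended argument: the paper states this lemma without any proof, and the one-line witness $x_2\in I_q\cap\perp$ (the upper corner lies in $I_q$ by definition and in $\perp$ by the hypothesis $\sum_{s\in S_s}x_2(s)>\lambda$) is exactly what it relies on. Your side observations --- that the eliminated coordinate $s_N$ should be a non-secret state so the secret mass is a nonnegative, hence monotone, functional of the retained coordinates, and that this makes the stated sufficient condition also necessary --- are sound and sharpen the lemma beyond what the paper claims.
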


Any $I_q$ that overlaps with $\perp$ is categorized as a bad region that should be avoided. Figure \ref{fig:partition} illustrates the partition of Example \ref{example} where the opacity requirement is that $b(s_1)+b(s_2)\leq 0.8$ all the time.  The blue shaded area denotes $\perp$ and all the grey shaded grids are bad regions. Therefore, we are only concerned with the $6$ non-shaded grids. We assume that the initial belief state is always outside of $\perp$. If the grid that contains the initial belief state is bad due to the overlapping, we may re-partition this grid into two smaller grids such that the initial belief state is no longer in a bad region.

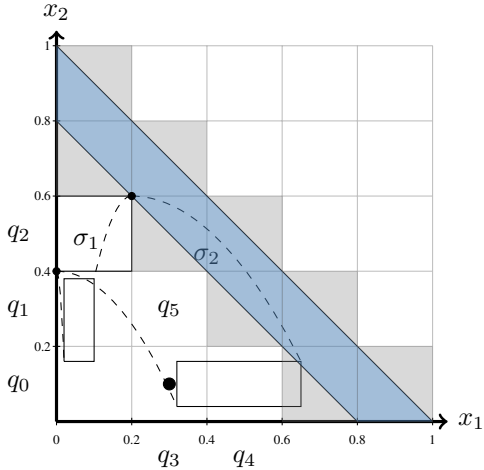
\begin{figure}[htpb]
    \centering
    \begin{tikzpicture}
    \draw[gray!50, thin, step=1] (0,0) grid (5,5);
    \draw[very thick,->] (0,0) -- (5.2,0) node[right] {$x_1$};
    \draw[very thick,->] (0,0) -- (0,5.2) node[above] {$x_2$};
    
    \draw (0,0.05) -- (0,-0.05) node[below] {\tiny 0};
    \draw (1,0.05) -- (1,-0.05) node[below] {\tiny 0.2};
    \draw (2,0.05) -- (2,-0.05) node[below] {\tiny 0.4};
    \draw (3,0.05) -- (3,-0.05) node[below] {\tiny 0.6};
    \draw (4,0.05) -- (4,-0.05) node[below] {\tiny 0.8};
    \draw (5,0.05) -- (5,-0.05) node[below] {\tiny 1};
    
    \draw (-0.05,1) -- (0.05,1) node[left] {\tiny 0.2};
    \draw (-0.05,2) -- (0.05,2) node[left] {\tiny 0.4};
    \draw (-0.05,3) -- (0.05,3) node[left] {\tiny 0.6};
    \draw (-0.05,4) -- (0.05,4) node[left] {\tiny 0.8};
    \draw (-0.05,5) -- (0.05,5) node[left] {\tiny 1};
    
    \draw (0,5) -- node[below,sloped] {} (5,0);
    
    \draw (0,4) -- node[below,sloped] {} (4,0);
    
    \draw (0.1,0.8) -- (0.5,0.8) -- (0.5,1.9) -- (0.1,1.9) -- (0.1,0.8);
    
    \draw (1.6,0.2) -- (3.25,0.2) -- (3.25,0.8) -- (1.6,0.8) -- (1.6,0.2);
    
    \draw (0,2) -- (1,2) -- (1,3) -- (0,3) -- (0,2);
    
    \draw[dashed] (0,2) parabola (0.1,0.8);
    
    \draw[dashed] (1,3) parabola (0.5,1.9);
    
    \draw[dashed] (0,2) parabola (1.6,0.2);
    
    \draw[dashed] (1,3) parabola (3.25,0.8);
    
    \node[] at  (0.4,2.4) {$\sigma_1$};
    
    \node[] at   (2,2.2) {$\sigma_2$};
    
    \node[] at   (-0.5,0.5) {$q_0$};
    \node[] at   (-0.5,1.5) {$q_1$};
    \node[] at   (-0.5,2.5) {$q_2$};
    
    \node[] at   (1.5,-0.5) {$q_3$};
    \node[] at   (2.5,-0.5) {$q_4$};
    \node[] at   (1.5,1.5) {$q_5$};
    
    \fill[blue!50!cyan,opacity=0.3] (0,5) -- (5,0) -- (4,0) -- (0,4) -- cycle;
    
    \fill[red!50!cyan,opacity=0.3] (0,5) -- (1,5) -- (1,3) -- (0,3) -- cycle;
    
    \fill[red!50!cyan,opacity=0.3] (1,4) -- (2,4) -- (2,2) -- (1,2) -- cycle;
    
    \fill[red!50!cyan,opacity=0.3] (2,3) -- (3,3) -- (3,1) -- (2,1) -- cycle;
    
    \fill[red!50!cyan,opacity=0.3] (3,2) -- (4,2) -- (4,0) -- (3,0) -- cycle;
    
    \fill[red!50!cyan,opacity=0.3] (4,1) -- (5,1) -- (5,0) -- (4,0) -- cycle;
    \draw[black,fill=black] (1.5,0.5)  circle (.5ex);
    
    \draw[black,fill=black] (0,2)  circle (.3ex);
    \draw[black,fill=black] (1,3)  circle (.3ex);



    \end{tikzpicture}
    \caption{A gridded partition, the opacity requirement is $x_1+x_2\leq 0.8$. The big dot denotes the initial condition.}
    \label{fig:partition}
\end{figure}

The second step is to construct the NFA $\mathcal{T}=(Q\bigcup bad,\Sigma,\delta,I)$ given the MDP model $\mathcal{M}=(S,\pi_0,A,P)$ and the partition $\{I_q\},q\in Q$, where $\Sigma=A$. To determine the transition relation in $\mathcal{T}$, $q'\in\delta(q,\sigma)$, if and only if $[f_\sigma(x_1,x_2),f_\sigma(x_2,x_1)]\bigcap I_{q'}\neq \emptyset$. That is,if our over-approximated one-step reachable set for $I_q$ has a non-empty interception with the partitioned region $I_q'$ given the action $\sigma$, there will be a transition relation $q'\in\delta(q,\sigma)$ in the abstraction system $\mathcal{T}$. We still take Figure \ref{fig:partition} as the example to illustrates how to determine the transition relation. All the shaded grids are bad regions and there are $6$ states (correspondingly $6$ regions) of interest in $\mathcal{T}$. Let's look at $q_2$. By mixed monotone property, we only have to evaluate two points, namely $p_1=(0,0.4)$ and $p_2=(0.2,0.6)$. With action $\sigma_1\in\Sigma$, from Figure \ref{fig:partition} it can be seen that the over-approximation reachable set overlaps with $q_0$ and $q_1$. Therefore, we have $q_0\in\delta(q_2,\sigma_1)$ and $q_1\in\delta(q_2,\sigma_1)$. 
Similarly, we have 
$q_3\in\delta(q_2,\sigma_2)$, $q_4\in\delta(q_2,\sigma_2)$ and $bad\in\delta(q_2,\sigma_2)$. 
Here, $bad$ denotes a bad region.

\begin{figure}
		\centering	
\begin{tikzpicture}[shorten >=1pt,node distance=2cm,on grid,auto, bend angle=20, thick,scale=1, every node/.style={transform shape}] 
				\node[state] (s0)   {$q_0$}; 
				\node[state] (s1) [above = of s0] {$q_1$}; 
				\node[state] (s2) [left= of s1]  {$q_2$}; 
				\node[state,initial] (s3) [below = of s2] {$q_3$}; 
				\node[state] (s4) [right= of s1] {$q_4$}; 
				\node[state] (s5) [below= of s4] {$q_5$}; 
		
				\path[->]
				(s0) edge [bend right=15]  node {} (s1)
				(s0) edge [loop below]  node {} ()
				(s1) edge [bend right=15]  node {} (s0)
				(s1) edge [loop above]  node {} ()

				(s2) edge   node {} (s1)
				(s2) edge   node {} (s0)
				(s3) edge [bend left=10]  node {} (s1)
				(s3) edge   node {} (s0)
				(s4) edge [bend left=10]  node {} (s1)
				(s4) edge [bend right=45]  node {} (s2)
				(s5) edge   node {} (s1)
				(s5) edge   node {} (s0)
			
				; 
				\path[->,dashed]

				(s1) edge [bend left=10]  node {} (s3)
				(s1) edge [bend left=10]  node {} (s4)
				
				(s5) edge [bend left=45]  node {} (s3)
				(s5) edge  node {} (s4)				
				; 
				\end{tikzpicture} 
		\caption{$\mathcal{T}$ after pruning, the solid lines denote the transitions induced by $\sigma_1$ and dashed lines for $\sigma_2$ }\label{fig:abstraction}
		   \vspace{-4.5mm}
	\end{figure}
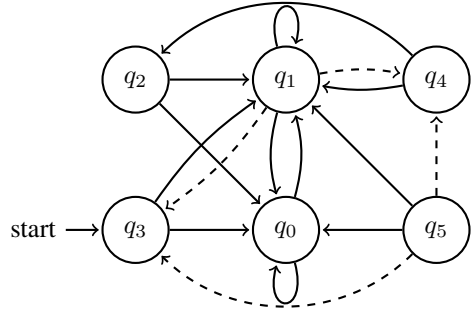

It should be noted that such abstraction could produce spurious trajectories that do not actually exist in (\ref{eqn:belief dynamic1}). This is generally unavoidable in the partition based approaches. However, since we are only interested in the safety property in the belief space (if bad belief state is reachable), such spuriousness may make the results more conservative, but does not affect its correctness, as all the transitions that are possible to happen in the concrete system (\ref{eqn:belief dynamic1}) are included in the abstraction system. 

Any outgoing transition $(q,\sigma,bad)$ should be deleted from $q$. To do this, we directly disable the action $\sigma$ from $q$, as the transitions are nondeterministic. For example, in Figure \ref{fig:partition}, since we have $(q_3,\sigma_2,bad)\in\rightarrow$, action $\sigma_2$ will be disabled in $q_3$. If such pruning results in any  state $q'$ blocking, that is, all its outing transitions for all actions are pruned, then $q'$ and all its incoming and outgoing transitions are deleted.  Such process continues until no more states are pruned from $\mathcal{T}$ or the initial state of $\mathcal{T}$ is pruned. If the latter situation happens, it implies that the current partition may be too coarse so that the over-approximation is too conservative, which we may need to find a finer partition scheme, for example, by having smaller grids. It could also be the case that the belief dynamics (\ref{eqn:belief dynamic1}) will eventually drive the belief state to $\perp$ under arbitrary switching. If this is the case, there is no hope to find a non-empty $\mathcal{T}$ after pruning, regardless of how the belief space is partitioned. Determining whether it is true relies on the reachability analysis of the underlying switched linear systems and is out of the scope of this paper.  The resulting NFA from the griding in Figure \ref{fig:partition} is shown in Figure \ref{fig:abstraction}. 

\section{Controller Synthesis}\label{sec:synthesis}
Once we obtain the abstracted belief model $\mathcal{T}$, together with the MDP model $\mathcal{M}$, it is then possible to synthesize a policy that simultaneously satisfies the task and privacy specification, regardless of how the nondeterminisim in the abstracted belief model is resolved. We propose two different solutions based on different capabilities of the intruders.

\subsection{Direct Synthesis}
For Example \ref{example}, if the intruder is the supplier which can observe the purchasing actions since the purchase has to go through it, we need a purchasing strategy such that the supplier may never be sure with high confidence that the company's inventory is running too low or too high. We take two steps to obtain synthesize the policy. The first step is to obtain a new MDP $\mathcal{M}'$ based on the original model $\mathcal{M}$ to constrain the available actions at each state considering the opacity constraint. Recall that we assume that in $\mathcal{M}$, $A(s)=A,\forall s\in S$. However, with the privacy constraints represented as $\mathcal{T}$, some of the actions may cause privacy violation (even though not necessarily, since $\mathcal{T}$ is an over-approximation of the concrete dynamics). The new MDP is  $\mathcal{M}'=(S',\pi_0',A',P')$ where $S'=S,\pi_0'=\pi_0,A'=A,P'=P$, the only difference is $A'(s)\subseteq A(s), \forall s\in S$. To obtain $A'(s)$, we propose to product the NFA $\mathcal{T}_\mathcal{M}$ obtained from $\mathcal{M}$ and $\mathcal{T}$. The synchronous product is defined in a standard way as follows \cite{baier2008principles}.
 \begin{definition}[Synchronous Product of NFAs]\label{def:parallel}
	Given two NFAs $ \mathcal{T}_i=(Q_i,\Sigma,\delta_i,I_i)$ with $i=1,2$, the product automaton $\mathcal{T}$ as the result of synchronous product of $\mathcal{T}_1$ and $\mathcal{T}_2$ is the NFA $\mathcal{T}=\mathcal{T}_1\otimes\mathcal{T}_2= (Q,\Sigma,\delta,I)$, where $Q= Q_1\times Q_2,I = I_1\times I_2$ and $(q_1',q_2')\in\delta((q_1,q_2),\sigma)$ if and only if $q_1'\in\delta_1(q_1,\sigma)$ and $q_2'\in\delta_2(q_2,\sigma)$.	
\end{definition}
Once we get $\mathcal{T}'=\mathcal{T}_\mathcal{M}\otimes\mathcal{T}=\{Q',\Sigma,\delta',I'\}$, we obtain $A'(s)$ as follows.
\begin{equation}\label{eqn:get A'(s)}
    A'(s) = \bigcap_{(q,s)\in Q'} \Sigma((q,s))
\end{equation}
where $\Sigma((q,s)\subseteq\Sigma$ denotes the set of actions available at the state $(q,s)$. Intuitively, $A'(s)$ denotes all the actions at $s$ that are guaranteed to preserve privacy at any time. With (\ref{eqn:get A'(s)}), we obtain $A'(s)=\{\sigma_1\},\forall s\in S$ in Figure \ref{fig:abstraction}. Then the second step is the controller synthesis performed on the MDP $\mathcal{M}'$ to obtain the policy such that the task specification $\phi$ in LTL or PCTL can be satisfied with the optimal probability $p$ on $\mathcal{M}'$. For this step, the synthesis algorithm can be found in \cite{baier2008principles}.
\begin{theorem}
The optimal policy obtained on $\mathcal{M}'$ satisfies the opacity specifications and incurs the same probability to satisfy the  specification as in $\mathcal{M}$.
\end{theorem}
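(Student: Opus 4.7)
The plan is to decompose the theorem into two independent claims: (i) the synthesized policy keeps the intruder's belief out of $\perp$ at every time step, and (ii) the satisfaction probability computed for $\phi$ on $\mathcal{M}'$ coincides with the probability achieved by executing the same policy on $\mathcal{M}$. Claim (ii) is essentially definitional and I would dispatch it first: by construction $\mathcal{M}'$ and $\mathcal{M}$ share state space, initial distribution, and transition kernel, differing only in that $A'(s)\subseteq A(s)$. Any policy $\mu$ that at each $s$ chooses from $A'(s)$ is therefore a valid policy on both models and induces identical Markov chains, so the probability of satisfying any LTL or PCTL formula is the same whether the chain is viewed as derived from $\mathcal{M}$ or from $\mathcal{M}'$.

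The substance lies in (i), which I would prove by induction on the time step $t$, maintaining the joint invariant: the current concrete state $s_t$ and every abstract region $q_t$ with $b_t\in I_{q_t}$ satisfy $(q_t,s_t)\in Q'$ of the pruned product $\mathcal{T}'=\mathcal{T}_{\mathcal{M}}\otimes\mathcal{T}$, and $q_t$ is not a bad region. The base case uses the assumption (made just after Figure \ref{fig:partition}) that the initial belief lies in a non-bad partition, together with $\pi_0(s_0)>0\Rightarrow s_0\in I$ of $\mathcal{T}_{\mathcal{M}}$. For the inductive step, write $\sigma=\mu(s_t)$. Equation (\ref{eqn:get A'(s)}) forces $\sigma\in\Sigma((q_t,s_t))$, so $\sigma$ has not been pruned at $q_t$ in $\mathcal{T}$; by the construction of $\mathcal{T}$ this means no $\sigma$-successor of $q_t$ is a bad region. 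Applying Theorem \ref{theorem:mixed monotone} and (\ref{eqn:efficient partition}) to the mixed monotone dynamics (\ref{eqn:belief dynamic1}) places the updated belief $b_{t+1}=H_\sigma b_t$ inside $[f_\sigma(x_1^{q_t},x_2^{q_t}),f_\sigma(x_2^{q_t},x_1^{q_t})]$, hence inside some $I_{q_{t+1}}$ that by definition of $\delta$ in $\mathcal{T}$ is a $\sigma$-successor of $q_t$ and therefore non-bad. Since simultaneously $s_{t+1}$ is a $\sigma$-successor of $s_t$ in $\mathcal{T}_{\mathcal{M}}$, Definition \ref{def:parallel} gives $(q_{t+1},s_{t+1})\in Q'$, closing the induction. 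Combining the invariant with Lemma 2 yields $b_{t+1}\notin\perp$, which is exactly (\ref{eqn:opacity}).

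The main obstacle I anticipate is handling the invariant carefully: the policy is a function of the concrete state $s$, whereas safety is a property of the abstract belief region $q$, and the intruder sees neither. The saving observation is that $A'(s)$ is defined as an intersection over \emph{all} pairs $(q,s)\in Q'$, not only over the one currently occupied, so the chosen action is simultaneously safe for every belief partition that could coexist with the current concrete state. Making this coupling rigorous — and verifying that the product in Definition \ref{def:parallel} is taken over the \emph{pruned} $\mathcal{T}$ rather than the raw abstraction so that reachability in $\mathcal{T}'$ already encodes the safety guarantee — is the only non-routine piece; once it is in place, the remainder of the proof reduces to a direct transcription of the abstraction-and-pruning procedure into the two claims above.
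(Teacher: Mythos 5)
Your proposal is correct and follows essentially the same route as the paper's proof: the same two-part decomposition (identical satisfaction probability because only the enabled action sets differ, and opacity because the intersection in (\ref{eqn:get A'(s)}) makes the chosen action safe regardless of which belief partition is currently occupied). You merely make explicit, via the time-step induction and the mixed-monotone over-approximation, the safety invariant that the paper asserts in a single sentence.
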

\begin{proof}
Since $\mathcal{M}'$ only differs from $\mathcal{M}$ in the available actions at each state, it is straightforward to see that the same policy induces the same probability on both of the MDP models. As for opacity specification, from (\ref{eqn:get A'(s)}), we are guaranteed to stay in the ``safe'' belief space since the action being enabled belongs to $A'(s)$ and any action selected from $A'(s)$ is safe regardless of the current abstracted belief partition the system is in. 
\end{proof}
Note that it could be the case that some states in $\mathcal{M}'$ do not have any action available, in such a case, an iterative pruning process is applied to delete such blocking states until either there is no more state to prune or one of the initial state is pruned. If it is the latter case, we may need a finer partition to make the abstraction less conservative.  

\subsection{Edit function}
If the intruder is the competitor in Example \ref{example},  it is then possible to manipulate the purchase activity report observable to it, such that the competitor may never  infer with high confidence of the company's inventory level being too low or too high. Unlike suppliers, the competitor cannot distinguish between the real or the reported purchase. This approach is inspired by the edit function synthesis in \cite{wuy2017synthesis} where the system has the capability to modify the observations of the intruder based on the real system action, such that the observed behavior is consistent with the model's behavior and at the same time, the intruder may never determine with certainty that the current state is a secret state. Formally, given an MDP $\mathcal{M}=(S,\pi_0,A,P)$ and its corresponding NFA $\mathcal{T}_\mathcal{M}=(Q,\Sigma,\delta,I)$ where $Q=S,\Sigma=A$, we are looking for an edit function $f_e:\Sigma^*\rightarrow\Sigma^*$, such that the followings are satisfied.
\begin{enumerate}
    \item $\forall\omega\in\mathcal{L}(\mathcal{T}_{\mathcal{M}}),f_e(\omega) \text{ is defined}$
    \item $\forall\omega\in\mathcal{L}(\mathcal{T}_{\mathcal{M}}),\exists s\in I, \delta(s,f_e(\omega)) \text{ is defined}$
    \item $\forall\omega\in\mathcal{L}(\mathcal{T}_{\mathcal{M}})$, after executing $f_e(\omega)$, the switched system defined as in (\ref{eqn:belief dynamic}) satisfies (\ref{eqn:opacity}).
\end{enumerate}
Intuitively, the first item requires that the edit function $f_e$ should be defined for all the possible behaviors of the system. The second item requires that the output of the edit function, which is observed by the intruder, should also be a valid behavior of the system. The third item requires that the output behavior of the edit function should satisfy the opacity specification. Note that from this definition, $f_e$ may not be unique.


Given a system modeled as an MDP $\mathcal{M}=(S,\pi_0,A,P)$, $f_e$ can be implemented as a (potentially) infinite-state edit automaton $\mathcal{T}_{f}=(Q_f,\Sigma,\delta_f,I_f)$, where $\Sigma=A$, $\delta_f\subseteq Q_f\times\Sigma\times\Sigma^*\times Q_f$. Therefore, each transition $(q,\sigma,o,q')$ in $\mathcal{T}_{f}$ denotes that from state $q$, when $\sigma\in\Sigma$ actually happens in the system, it is modified to become $o\in\Sigma^*$ which is observed by the intruder, and then the edit automaton transits to some $q'$. Intuitively, if we edit every possible executions to be the empty string $\epsilon$, the intruder will observe nothing and the system will always be opaque if it is opaque initially. However, such case may become trivial. Therefore, we restrict the transitions of the edit automaton to be of the form $\delta_f\subseteq Q_f\times\Sigma\times\Sigma\times Q_f$, that is, it must output one and only one event $\sigma\in\Sigma$ after an event has actually happened in the system.

In this paper,  $f_e$ is easier to synthesize, since all actions are defined at every state, and the second requirement of $f_e$ is automatically satisfied.  To guarantee the third requirement, the output behavior of $f_e$ can be the language generated by the abstraction $\mathcal{T}=(Q,\Sigma,\delta,I)$. That is, the edit automaton $\mathcal{T}_{f}=(Q,\Sigma,\delta_f,I)$, where given the transition $(q,o,q'),o\in\Sigma$ in $\mathcal{T}$ and given the actual event $\sigma\in\Sigma$, there is a transition $(q,\sigma,o,q')$. Therefore, the intruder observes a subset of the generated language of $\mathcal{T}$, which is guaranteed to preserve opacity. Furthermore, since we don't have any restriction on the actual event $\sigma$, the requirement $1$ of the edit function is also satisfied.

In our example, the observation function is essentially the abstracted model $\mathcal{T}$ in Figure \ref{fig:abstraction}. Regardless of the real system action  that is executed, starting from $q_0$, the edit function may select any action $\sigma$ that is defined at the current belief region $q$ to be the observation to the intruder. Then the next abstracted belief state $q'$ is determined by the belief dynamic (\ref{eqn:belief dynamic1}). Note that such update is based on the ``fake'' action $\sigma$, not the real system action, which is hidden by the edit function. For example, starting from $q_0$ in Figure \ref{fig:abstraction}, the edit function may output $\sigma_1$, regardless of actually event $\sigma_1$ or $\sigma_2$ happened. If the belief state update based on the output behavior $\sigma_1$ results in $q_1$, from Figure \ref{fig:abstraction}, next time it could either output $\sigma_1$ or $\sigma_2$ irrespective of actual event.

In this approach, it can be observed that the privacy enforcement and the controller synthesis are decoupled. We could separately obtain the edit function from $\mathcal{T}$ and synthesize the optimal policy for a given specification. Therefore, the advantage of this approach comparing to the direct synthesis is that the optimal performance can always be obtained, regardless of the privacy constraint, while in direct synthesis, the available actions at each state are limited by (\ref{eqn:get A'(s)}).

\section{Conclusion}\label{sec:conclusion}
In this paper we proposed, to the best our knowledge, a new notion of opacity defined on the belief space. We then proposed two approaches to synthesis privacy preserving controllers that regulates the MDP model, so that  the privacy can be preserved. Both approaches rely on the abstracted model on the belief space, where we proved that the belief dynamic is mixed monotone and thus efficient abstraction algorithm exists. 
Our future work will be focusing on exploring less conservative approaches to guarantee the privacy and task accomplishment.

\bibliographystyle{IEEEtran}
\bibliography{root}
\end{document}